\newcommand{\F}{\mathbb{F}}
\newcommand{\Aut}{\mbox{\rm Aut}}
\newcommand{\wt}{\mbox{\rm wt}}
\def\ds{\displaystyle}
\journalname{Designs, Codes and Cryptography}
\begin{document}

\title{On the Automorphisms of Order 15 for a Binary Self-Dual $[96, 48, 20]$ Code
}


\author{Stefka Bouyuklieva         \and
        Wolfgang Willems \and Nikolay Yankov
}


\institute{S. Bouyuklieva \at
              Faculty of Mathematics and Informatics,
Veliko Tarnovo University, Bulgaria,\\
                            \email{stefka@uni-vt.bg}           
           \and
           W. Willems \at
             Otto-von-Guericke Universit\"at, Magdeburg, Germany,\\
             \email{willems@ovgu.de}
             \and N. Yankov \at
             Faculty of Mathematics and Informatics,
 Shumen University, Bulgaria,\\
 \email{jankov\_niki@yahoo.com}
}

\date{Received: date / Accepted: date}

\maketitle

\begin{abstract}
The structure of binary self-dual codes invariant under the
action of a cyclic group of order $pq$ for odd primes $p\neq q$ is
considered. As an application we prove the nonexistence of an extremal self-dual $[96, 48,
20]$ code  with an automorphism of order $15$ which closes a gap in \cite{JW}.
\keywords{Self-dual codes \and Doubly-even codes \and Automorphisms}
\subclass{MSC 94B05 \and 20B25}
\end{abstract}

\section{Introduction}
\label{intro}
Let $C=C^\perp$ be a binary self-dual code of length $n$ and
minimum distance $d$. A binary code is doubly-even if the weight
of every codeword is divisible by four. Self-dual doubly-even
codes exist only if $n$ is a multiple of eight. Rains \cite{Rains}
proved that the minimum distance $d$ of a binary self-dual
$[n,k,d]$ code satisfies the following bound:
$$d\le 4 \lfloor n/24 \rfloor +4, \ \ \ {\rm if} \ n\not\equiv 22\pmod {24},$$
$$d\le 4 \lfloor n/24 \rfloor +6, \ \ \ {\rm if} \ n\equiv 22\pmod {24}.$$
Codes achieving this bound are called extremal. If $n$ is a
multiple of 24, then a self-dual code meeting the bound must be
doubly-even \cite{Rains}. Moreover, for any nonzero weight $w$ in
such a code, the codewords of weight $w$ form a 5-design
\cite{Assmus}. This is one reason  why
 extremal codes of length $24m$ are of particular interest. Unfortunately, only for $m=1$ and $m=2$
 such codes are known, namely the $[24, 12, 8]$ extended Golay  code and the $[48, 24, 12]$  extended quadratic residue code
 (see \cite{Handbook-SelfDual}). To date, the existence of no other extremal code of length $24m$ is known.
For $n=96$, only the primes $2,3$ and $5$ may divide the order of the automorphism group of the extremal code 
and the cycle structure of prime  order automorphisms are
as follows
\begin{equation}\label{primes} {
\begin{tabular}{c|c|c}
$p$ & \mbox{\rm number of $p$-cycles} & \mbox{ \rm number of fixed points} \\
\hline
$2$ & $48$ & $0$ \\
$3$ & $ 30,32$ & $6,0 $\\
$5$ & $18$ & $6$ \\
\end{tabular} }
\end{equation}
(see  Theorem,
part a) in \cite{JW}). We would like to mention here that in part
b) of the Theorem (the case where elements of order $3$ are acting fixed
point freely)  four orders of possible automorphism groups are missing, namely  $15, 30,
240$ and $480$. The gap is due to the  fact that the existence of
elements of order $15$ with six cycles of length $15$ and two
cycles of length $3$  are not excluded in the given proof. We close
this gap by proving

\begin{theorem}\label{Main_theorem}  A binary doubly-even $[96,48,20]$
self-dual code with an automorphism of order 15 does not exist.
\end{theorem}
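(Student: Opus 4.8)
\emph{Step 1 (cycle structure).} I would first pin down the cycle type of $\sigma$. Writing $f_d$ for the number of $d$-cycles of $\sigma$, and using that a $15$-cycle splits into three $5$-cycles under cubing and into five $3$-cycles under fifth powering (while a $3$-cycle contributes three fixed points of $\sigma^3$ and a $5$-cycle five fixed points of $\sigma^5$), the cycle types of $\sigma^3$ (order $5$) and $\sigma^5$ (order $3$) from \eqref{primes} give $f_1+3f_3=6$, $f_5+3f_{15}=18$, and either $(f_1+5f_5,\,f_3+5f_{15})=(6,30)$ or $(0,32)$. With $f_1+3f_3+5f_5+15f_{15}=96$ this forces exactly $(f_1,f_3,f_5,f_{15})\in\{(6,0,0,6),\,(0,2,0,6)\}$, the second being precisely the configuration (six $15$-cycles and two $3$-cycles) left open in \cite{JW}.

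\emph{Step 2 (module decomposition).} Since $\gcd(2,15)=1$, the algebra $\F_2\langle\sigma\rangle\cong\F_2[x]/(x^{15}-1)$ is semisimple, and over $\F_2$ one has $x^{15}-1=(x+1)\,\Phi_3\,\Phi_5\,g_1\,g_2$ with $\deg\Phi_3=2$, $\deg\Phi_5=\deg g_1=\deg g_2=4$, where $g_1,g_2$ are the two irreducible factors of $\Phi_{15}$ and $g_2(x)=x^4g_1(1/x)$ (the Frobenius orbits $\{1,2,4,8\}$ and $\{7,11,13,14\}$ in $\mathbb{Z}/15\mathbb{Z}$ are interchanged by negation). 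Hence $C$ is the orthogonal direct sum of its $\sigma$-isotypic components, which self-duality forces to be: a binary self-dual code $C_{(1)}$ (the projected fixed code of $\sigma$); a Hermitian self-dual $\F_4$-code $C_{(3)}$; a Hermitian self-dual $\F_{16}$-code $C_{(5)}$ (from $\Phi_5$); and a $\Phi_{15}$-part $C_{(15)}=D\oplus D^{\perp}$, with $D$ an arbitrary $\F_{16}$-subspace of $\F_{16}^6$ (the $g_1$-component, one coordinate per $15$-cycle) and $D^{\perp}$ its ordinary $\F_{16}$-dual inside the $g_2$-component. Equivalently one may run the classical two-step reduction — split $C$ under $\sigma^5$ of order $3$ into the fixed code and the all-zero-sum code, obtaining a binary self-dual code and an $\F_4$-Hermitian self-dual code each carrying an induced order-$5$ automorphism, and then reduce those — which is probably the most practical organization.

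\emph{Step 3 (rigidity).} The crucial point is that the length-$15$ constituents are one-weight or nearly so: the two minimal ideals attached to $g_1,g_2$ are $[15,4,8]$ simplex codes (all nonzero words of weight $8$), the $\Phi_3$-ideal is the one-weight $[15,2,10]$ code, the $\Phi_5$-ideal has all nonzero weights in $\{6,12\}$, and on a $3$-cycle the $\Phi_3$-constituent has weight $2$. Consequently a codeword of $C$ lying entirely in the $g_1$-component has weight $8\cdot|\mathrm{supp}(d)|$ for $d\in D$, so $d(C)=20$ forces $|\mathrm{supp}(d)|\ge3$; thus $D$ and, symmetrically, $D^{\perp}$ have minimum distance $\ge3$, whence $\dim_{\F_{16}}D\in\{2,3,4\}$ by the Singleton bound applied to both. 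A weight-$2$ codeword of $C_{(5)}$ would, for a suitable scalar multiple, have weight at most $18<20$, which is impossible, so $d(C_{(5)})\ge3$ and $C_{(5)}$ lies in a short list of Hermitian self-dual $[6,3]_{16}$ codes; and every nonzero codeword of $C_{(3)}$ must be supported on at least two of its $15$-cycle coordinates. Together with divisibility by $4$ of all weights of $C$ — which in the case $(0,2,0,6)$ forces $C_{(1)}$ to be the $[8,4,4]$ Hamming code $e_8$, forces the $\sigma^3$-reduction to be a doubly-even self-dual $[24,12]$ code none of whose nonzero words has weight pattern $(a,b)\in\{(0,4),(1,3),(2,2),(3,1),(2,6)\}$ across its $18+6$ coordinates, and forces the $\sigma^5$-reduction to be an \emph{extremal} doubly-even $[32,16,8]$ code carrying an order-$5$ automorphism of type $5^6 1^2$ (only a short sublist of the five such codes) — the possibilities for $(C_{(1)},C_{(3)},C_{(5)},D)$ shrink to a finite, small family.

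\emph{Step 4 (assembly, and the main obstacle).} For each admissible quadruple I would reconstruct the corresponding $\sigma$-invariant self-dual $[96,48]$ code — gluing the pieces along the common $15$-cycle structure so as to respect the induced $\sigma$-action, and discarding any assembly for which some lift has weight $<20$ or $\not\equiv0\pmod4$ — and then compute its minimum distance, expecting a codeword of weight $<20$ in every surviving case, which contradicts $d(C)=20$. The main difficulty I anticipate is computational rather than conceptual: even after $D$ is cut down to an $\F_{16}$-code with $d,d^{\perp}\ge3$, the interaction of the choices for $D$ with those for the $\F_4$- and $\F_{16}$-pieces still produces a sizeable family, and the real work is to propagate the weight-$\ge20$ and divisibility conditions aggressively enough — and to quotient correctly by equivalence under the normalizer $N_{S_{96}}(\langle\sigma\rangle)$ — to keep the enumeration exhaustive yet small enough to certify the minimum-distance drop in each case. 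The configuration $(0,2,0,6)$, which is exactly the one missing from \cite{JW}, is where this is tightest.
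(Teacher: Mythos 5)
Your Steps 1--3 are correct and follow essentially the same route as the paper: the cycle-type analysis gives exactly the two configurations $(f_1,f_3,f_5,f_{15})=(6,0,0,6)$ and $(0,2,0,6)$ (the paper's $(t_3,f)=(0,6)$ and $(2,0)$ with $c=6$, $t_5=0$); the semisimple decomposition of $\F_2[x]/(x^{15}-1)$ into $(x+1)\Phi_3\Phi_5 g_1 g_2$ yields the same constituents the paper calls $C_\pi$, $M_1$, $M_2$, $M'$, $M''=(M')^\perp$; and your rigidity observations (simplex-code weights forcing $d(M'),d(M'')\ge 3$, hence $\dim M'\in\{2,3,4\}$; $d(M_2)\ge 3$; $C_\pi\cong e_8$ in the $(0,2,0,6)$ case) reproduce the paper's constraints, obtained there via the nonexistence of binary $[30,4,\ge 20]$ codes and Lemma \ref{cor5}. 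One small imprecision: in the configuration with two $3$-cycles the $\Phi_3$-isotypic component is a Hermitian self-dual $\F_4$-code of length $8$ (six $15$-cycle coordinates plus two $3$-cycle coordinates), not $6$; the paper works with its shortening, a self-orthogonal $[6,2]$ code over $\F_4$. Your additional observations --- that the $\sigma^5$-projection must be one of the five extremal doubly-even $[32,16,8]$ codes with an automorphism of type $5$-$(6,2)$, and that the $\sigma^3$-projection is a doubly-even self-dual $[24,12]$ code avoiding certain weight patterns --- are correct and are genuinely different leverage points not used in the paper; they could in principle shorten the search, but you do not exploit them.

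The genuine gap is Step 4: it is a plan, not a proof. Nothing in Steps 1--3 contradicts the existence of the code; the structural constraints leave a nonempty finite family of candidates, and the nonexistence emerges \emph{only} from exhaustively assembling and testing them. This is precisely where the paper's actual content lies: it enumerates the admissible $M'$ in three cases ($[6,2,5]$ MDS, $[6,3,4]$ MDS, $[6,3,3]$), combines each surviving $\phi^{-1}(M'\oplus M'')$ with the two Hermitian self-dual $[6,3,\ge 3]$ codes over $\F_{16}$ modulo the normalizer action to obtain $675$ inequivalent $[90,36,20]$ codes, and then verifies by computer that adjoining the fixed code and the $\F_4$-part always drops the minimum distance below $20$ (in the $f=0$ case only $47$ intermediate $[96,40,20]$ codes survive and none extends; in the $f=6$ case every extension has $d\le 16$). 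You write that you ``expect'' a codeword of weight $<20$ in every surviving case; that expectation is the theorem, and asserting it does not establish it. Until the enumeration is actually carried out (and certified exhaustive modulo the equivalences you correctly identify), the proof is incomplete.
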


This note consists of three sections. Section 2 is devoted to some
theoretical results on  binary self-dual codes invariant under
the action of a cyclic group. In Section 3 we study the structure of a
putative extremal self-dual $[96, 48, 20]$ code having an automorphism of order 15.
Using this structure and combining the possible subcodes we prove Theorem \ref{Main_theorem}.
In an additional section, namely Section 4, we prove that an extremal self-dual code of length 96 does not have automorphisms of type 3-(28,12).
This assertion is used by other authors but no proof has been published so far.

\section{Theoretical results}
\label{sec:1}

Let $C$ be a binary linear code of length $n$ and let $\sigma$ be an
automorphism of $C$ of order $r$ where $r$ is odd (not necessarily a prime). Let
\begin{equation}\label{general sigma}
\sigma =\Omega_1\Omega_2 \dots\Omega_m
\end{equation}
be the factorization of $\sigma$ into disjoint cycles (including
the cycles of length 1). If $l_i$ is the length of the cycle
$\Omega_i$ then ${\rm lcm}(l_1,\dots,l_m)=r$ and $l_i$ divides $r$.
Therefore $l_i$ is odd for $i=1,\dots,m$ and $1\leq l_i\leq r$.

Let $F_{\sigma} (C)=\{ v\in C:v\sigma=v\}$ and
$$E_{\sigma}(C)=\{ v\in C:wt(v\vert \Omega_i)\equiv 0 \ (\bmod \ 2), \, i=1,\dots,m\},$$
 where $v\vert\Omega_i$ is the restriction of $v$ on $\Omega_i$.
 The following theorem is similar to Theorem 1 from \cite{Huff86} but Huffman's result is due to an automorphism that has only $c$ $r$-cycles and $f$ fixed points. We consider automorphisms of odd order $r$ which can be factorized into independent cycles of different lengths.

 \begin{theorem}
 The code $C$ is a direct sum of the subcodes $F_{\sigma}(C)$ and $E_{\sigma}(C)$.
\end{theorem}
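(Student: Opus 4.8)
\medskip\noindent\textbf{Proof plan.} The plan is to realise the splitting by an explicit idempotent $\F_2$-linear operator on $C$. Since $C$ is $\sigma$-invariant, the averaging operator
$$T:=1+\sigma+\sigma^2+\cdots+\sigma^{r-1}$$
acts on $C$ (it is a polynomial in $\sigma$). I would first check that $T$ is idempotent with image $F_{\sigma}(C)$ and then show that its kernel is $E_{\sigma}(C)$; the image--kernel decomposition of an idempotent on the $\F_2$-space $C$ then gives $C=F_{\sigma}(C)\oplus E_{\sigma}(C)$.

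For the image: $\sigma^r=1$ gives $\sigma T=T$, so $Tv\in F_{\sigma}(C)$ for all $v\in C$; and if $v\in F_{\sigma}(C)$ then $\sigma^jv=v$ for every $j$, hence $Tv=r\cdot v=v$ because $r$ is odd. Thus $T$ is the identity on $F_{\sigma}(C)$, which together with $TC\subseteq F_{\sigma}(C)$ yields $T^2=T$ and $TC=F_{\sigma}(C)$. For the kernel I would compute $T$ one cycle at a time: on the coordinates of a cycle $\Omega_i$ of length $l_i$, since $l_i\mid r$ and both are odd, $r/l_i$ is odd, so there $T=\frac{r}{l_i}(1+\sigma+\cdots+\sigma^{l_i-1})=1+\sigma+\cdots+\sigma^{l_i-1}$. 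The latter operator assigns to each coordinate of $\Omega_i$ the value $\sum_{j\in\Omega_i}v_j\equiv\wt(v\vert\Omega_i)\pmod 2$, so $(Tv)\vert\Omega_i$ is the all-ones vector when $\wt(v\vert\Omega_i)$ is odd and is $0$ when it is even. Hence $Tv=0$ exactly when $\wt(v\vert\Omega_i)\equiv 0\pmod 2$ for all $i$, i.e.\ $\ker(T|_C)=E_{\sigma}(C)$, and the claimed direct sum follows.

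I do not expect a real obstacle; the one point needing attention is the bookkeeping in passing from the global $T$ to its restriction to a single $\Omega_i$, where the oddness of $r$ (hence of every $r/l_i$) is exactly what is used, and this is also the feature that lets the argument accommodate cycles of several different odd lengths, unlike Huffman's prime-order version. As an independent check that the sum is direct: a codeword that is constant on each cycle and of even weight on each cycle has weight $0$ or $l_i$ on $\Omega_i$ with $l_i$ odd, so it vanishes.
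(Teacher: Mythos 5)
Your proof is correct and follows essentially the same route as the paper: both decompose $v$ via the averaging operator $T=1+\sigma+\cdots+\sigma^{r-1}$ (the paper writes $w=Tv$, $u=v+Tv$), exploiting the oddness of $r$ in the same places. Your explicit cycle-by-cycle identification of $\ker(T|_C)$ with $E_{\sigma}(C)$ is just a more structural packaging of the paper's parity-of-weights argument showing $v+Tv\in E_{\sigma}(C)$; the substance is identical.
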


\begin{proof} We follow the proof of Lemma 2 in
\cite{Huff}. Obviously, $F_{\sigma}(C)\cap
E_{\sigma}(C)= \{0\}$. Let $v\in C$ and
$w=v+\sigma(v)+\cdots+\sigma^{r-1}(v)$. Since $w\in C$ and
$\sigma(w)=w$ we get  $w\in F_{\sigma}(C)$.

On the other hand,
$\wt(\sigma^j(v)\vert_{\Omega_i})=\wt(v\vert_{\Omega_i})$ for all
$i=1,2,\dots,m$ and $j\ge 1$. Hence
$\sigma(v)+\cdots+\sigma^{r-1}(v)\vert_{\Omega_i}$ is a sum of an
even number of vectors of the same weight. Thus
$\wt(\sigma(v)+\cdots+\sigma^{r-1}(v)\vert_{\Omega_i})$ is even for
$i=1,2,\dots,m$. It follows that
$u=\sigma(v)+\cdots+\sigma^{r-1}(v)\in E_{\sigma}(C)$. So
$v=w+u\in F_{\sigma}(C)+E_{\sigma}(C)$ which proves that
$C=F_{\sigma}(C)\oplus E_{\sigma}(C)$.
\end{proof}

Let $\F_2^n$ be the $n$-dimensional vector space over the binary field $\F_2$, and let $\pi :F_{\sigma}(C)\to\mathbb{F}_{2}^{m} $ be the projection
    map, i.e.,  $ (\pi(v))_i =v_j $ for some
    $ j\in\Omega_i$ and $ i=1,2,\ldots,m$.
Clearly, $v\in F_{\sigma}(C)$ iff $v\in C$ and $v$ is constant on
    each cycle.  The following theorem is similar to Theorem 3 from \cite{Huff86} but as for the previous theorem, Huffman's result is due to an automorphism that has only $c$ $r$-cycles and $f$ fixed points. We consider here binary codes having an automorphism $\sigma$ of odd order $r$ without other restrictions.

\begin{theorem}
 If $C$ is a binary self-dual code with an
automorphism $\sigma$ of 
odd order  then
$C_{\pi}=\pi(F_{\sigma}(C))$ is a binary self-dual code of length
$m$. \end{theorem}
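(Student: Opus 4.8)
The plan is to establish the two ingredients of self-duality for $C_\pi$ separately: that $C_\pi$ is self-orthogonal in $\F_2^m$, and that $\dim C_\pi = m/2$. First I would record that the restriction of $\pi$ to $F_\sigma(C)$ is injective, since a vector in $F_\sigma(C)$ is constant on each cycle $\Omega_i$ and is therefore completely determined by the values $(\pi(v))_i$, $i=1,\dots,m$. Consequently $\dim C_\pi = \dim F_\sigma(C)$, so it suffices to control this last dimension.

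For self-orthogonality I would take $u,v\in F_\sigma(C)$ and compare the standard inner product of $\pi(u),\pi(v)$ in $\F_2^m$ with that of $u,v$ in $\F_2^n$. Since $u$ and $v$ are constant on $\Omega_i$, the coordinates lying in $\Omega_i$ contribute $l_i(\pi(u))_i(\pi(v))_i$ to $\langle u,v\rangle$, where $l_i=|\Omega_i|$; summing over $i$ gives $\langle u,v\rangle=\sum_{i=1}^m l_i(\pi(u))_i(\pi(v))_i$. As $r$ is odd, every $l_i$ is odd, so modulo $2$ this equals $\sum_{i=1}^m (\pi(u))_i(\pi(v))_i=\langle\pi(u),\pi(v)\rangle$. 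Since $C$ is self-dual, $\langle u,v\rangle=0$, hence $\langle\pi(u),\pi(v)\rangle=0$ and $C_\pi\subseteq C_\pi^\perp$.

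It remains to show $\dim F_\sigma(C)=m/2$, which in particular forces $m$ to be even. Let $V\le\F_2^n$ be the subspace of all vectors that are constant on each cycle; then $\dim V=m$, $F_\sigma(C)=C\cap V$, and, since $V$ is spanned by the characteristic vectors of the cycles, $V^\perp=\{w\in\F_2^n:\wt(w\vert_{\Omega_i})\equiv 0\pmod{2},\ i=1,\dots,m\}$, so that $C\cap V^\perp=E_\sigma(C)$ and $\dim V^\perp=n-m$. Using $C=C^\perp$ one gets $C+V=(C\cap V^\perp)^\perp=E_\sigma(C)^\perp$, hence $\dim(C+V)=n-\dim E_\sigma(C)$; combining this with $\dim(C\cap V)=\dim C+\dim V-\dim(C+V)$ and $\dim C=n/2$ yields $\dim F_\sigma(C)=m-n/2+\dim E_\sigma(C)$. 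Finally the previous theorem gives $C=F_\sigma(C)\oplus E_\sigma(C)$, so $\dim F_\sigma(C)+\dim E_\sigma(C)=n/2$; substituting, $\dim F_\sigma(C)=m-\dim F_\sigma(C)$, i.e. $\dim F_\sigma(C)=m/2$. Together with the self-orthogonality above and $\dim C_\pi=\dim F_\sigma(C)$, this gives $C_\pi=C_\pi^\perp$. I expect the dimension bookkeeping in this last paragraph to be the only delicate point; the self-orthogonality is immediate from the oddness of $r$, and the injectivity of $\pi$ on $F_\sigma(C)$ and the decomposition $C=F_\sigma(C)\oplus E_\sigma(C)$ are already in hand.
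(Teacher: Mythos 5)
Your proof is correct, and the second half takes a genuinely different route from the paper. Both arguments start identically: self-orthogonality of $C_\pi$ follows because $v,w\in F_\sigma(C)$ are constant on each cycle and the cycle lengths $l_i$ are odd, so $\langle\pi(v),\pi(w)\rangle=\langle v,w\rangle=0$. For the converse inclusion the paper argues by direct containment: given $u\in C_\pi^\perp$, it lifts $u$ to $u'=\pi^{-1}(u)$, checks that $u'$ is orthogonal to $F_\sigma(C)$ (by the projection identity) and to $E_\sigma(C)$ (since $u'$ is constant on each $\Omega_i$ while $\wt(v|_{\Omega_i})$ is even), concludes $u'\in C^\perp=C$, hence $u'\in F_\sigma(C)$ and $u\in C_\pi$. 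You instead run a dimension count: with $V$ the space of cycle-constant vectors you identify $F_\sigma(C)=C\cap V$ and $E_\sigma(C)=C\cap V^\perp$, use $(C\cap V^\perp)^\perp=C+V$ together with $\dim C=n/2$ and the direct-sum decomposition to get $\dim F_\sigma(C)=m/2$, which combined with injectivity of $\pi$ on $F_\sigma(C)$ and self-orthogonality yields self-duality. Both proofs lean on the same two facts (self-duality of $C$ and $C=F_\sigma(C)\oplus E_\sigma(C)$), but yours has the bonus of explicitly deriving $\dim F_\sigma(C)=m/2$ in this general cycle-structure setting, which is essentially the content of the subsequent theorem quoted from Huffman; the paper's version is shorter and avoids the lattice-of-subspaces bookkeeping. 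All the individual steps in your dimension computation check out.
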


\begin{proof} Let $v,w\in F_{\sigma}(C)$. If $\langle \cdot  \, ,\cdot \rangle$ denotes the Euclidean inner product on $\F_2^n$ then $\langle v,w\rangle=\langle\pi(v),\pi(w)\rangle=0$ since $l_i$ is odd for all $i$.
 Hence $C_{\pi}$ is a self-orthogonal code.
If $u\in C_{\pi}^\perp$ and $u'=\pi^{-1}(u)$ then $\langle
u',v\rangle=\langle u,\pi(v)\rangle=0$ for all $v\in F_\sigma(C)$.
Furthermore,
 $\langle u',v\rangle = \sum_{i=1}^m \langle
u'\vert_{\Omega_i},v\vert_{\Omega_i}\rangle= 0$
for all $v\in E_\sigma(C)$ since $u'$ is constant on $\Omega_i$ and $\wt(v|_{\Omega_i})$ is even. Thus $u'\in C^\perp=C$. Hence, $u' \in F_\sigma(C)$ and
 therefore $u=\pi(u')\in C_{\pi}$ which proves that $C_{\pi}$ is a
self-dual code.
\end{proof}

\begin{theorem}{\rm \cite{Huff86}}
Let $C$ be a binary self-dual code of length $n=cr+f$ and let $\sigma$
be an automorphism of $C$ of odd order $r$ such that
\begin{equation}\label{cr+f}
\sigma =\Omega_1\dots\Omega_c\Omega_{c+1}\dots\Omega_{c+f}
\end{equation}
where $\Omega_i=((i-1)r+1,\dots,ir)$ are cycles of
length $r$ for $ i=1,\dots,c$, and $\Omega_{c+i}=(cr+i)$ are the fixed
points for $i=1,\dots,f$. Then $F_{\sigma}(C)$ and $E_{\sigma}(C)$ have dimension
$(c+f)/2$ and $c(r-1)/2$, respectively.
\end{theorem}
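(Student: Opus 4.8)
The plan is to read off both dimensions directly from the two structural theorems just established, so that almost nothing new has to be done. First I would compute $\dim_{\mathbb{F}_2}F_{\sigma}(C)$ by transporting $F_{\sigma}(C)$ to $C_{\pi}$. The one point that needs attention is that the projection $\pi$ is \emph{injective} on $F_{\sigma}(C)$: a vector $v\in F_{\sigma}(C)$ is constant on each cycle $\Omega_{i}$ (since $v\sigma=v$ and $\sigma$ acts on $\Omega_{i}$ as a single $l_{i}$-cycle), and $(\pi(v))_{i}$ is exactly that constant value, so $\pi(v)=0$ forces $v=0$ — note that $\pi$ is very far from injective on $C$ itself. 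Since $\pi$ maps $F_{\sigma}(C)$ onto $C_{\pi}$ by definition, it is an $\mathbb{F}_2$-linear isomorphism $F_{\sigma}(C)\xrightarrow{\ \sim\ }C_{\pi}$. By the preceding theorem $C_{\pi}$ is a binary self-dual code, and its length is the number of cycles appearing in (\ref{cr+f}), namely $c+f$; a self-dual code of length $c+f$ has dimension $(c+f)/2$, so $\dim_{\mathbb{F}_2}F_{\sigma}(C)=(c+f)/2$.

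For $E_{\sigma}(C)$ I would then just subtract, using the direct-sum theorem $C=F_{\sigma}(C)\oplus E_{\sigma}(C)$ and the fact that a self-dual code of length $n=cr+f$ has dimension $n/2$:
\[
\dim_{\mathbb{F}_2}E_{\sigma}(C)=\dim_{\mathbb{F}_2}C-\dim_{\mathbb{F}_2}F_{\sigma}(C)=\frac{cr+f}{2}-\frac{c+f}{2}=\frac{c(r-1)}{2}.
\]

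There is really no obstacle here: all of the content is packaged into the two theorems above (and the hypothesis that $r$ is odd enters only through the self-duality of $C_{\pi}$), and the sole step requiring care is the injectivity of $\pi|_{F_{\sigma}(C)}$ noted above. Should one prefer a more self-contained route, one could instead observe that $E_{\sigma}(C)$ restricted to each $r$-cycle lies in the even-weight subcode of $\mathbb{F}_2^{r}$, which has dimension $r-1$, giving the bound $\dim_{\mathbb{F}_2}E_{\sigma}(C)\le c(r-1)$, and then upgrade it to an equality via the count for $F_{\sigma}(C)$; but the argument through $C_{\pi}$ is the cleanest.
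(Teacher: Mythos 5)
Your proof is correct, and it is essentially the intended argument: the paper states this result without proof (citing Huffman), but it is positioned immediately after the two theorems you invoke, and your derivation --- $\pi$ restricted to $F_{\sigma}(C)$ is a bijection onto the self-dual code $C_{\pi}$ of length $c+f$, hence $\dim F_{\sigma}(C)=(c+f)/2$, and then $\dim E_{\sigma}(C)$ follows by subtraction from $\dim C=(cr+f)/2$ via the direct-sum decomposition --- is exactly the standard route. The one point you rightly single out, injectivity of $\pi$ on $F_{\sigma}(C)$, is handled correctly, and the parity needed for $C_{\pi}$ to be self-dual is automatic since $r$ odd forces $c+f\equiv cr+f\pmod 2$.
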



If $\sigma$ is of prime order $p$ with $c$ cycles of length $p$ and $f$ fixed points we say that $\sigma$ is of type
$p$-$(c,f)$.

\subsection{Connections with quasi-cyclic codes}

For further investigations, we need two theorems concerning
the theory of finite fields and cyclic codes. Let $r$ be a
positive integer coprime to the characteristic of the
field $\mathbb{F}_l$ of cardinality $l$, where $l$ is the power of a prime. Consider the factor ring
$\mathcal{R}=\mathbb{F}_l[x]/(x^r-1)$, where $(x^r-1)$ is the
principal ideal in $\mathbb{F}_l[x]$ generated by $x^r-1$. Let
$$x^r-1=f_0(x)f_1(x)\dots f_s(x)$$
be the factorization of $x^r-1$ into irreducible factors $f_i(x)$ over
$\F_l$ where $f_0(x)=x-1$. Let
 $I_j=\langle
\ds\frac{x^r-1}{f_j(x)}\rangle$ be the ideal of $\mathcal{R}$
generated by $\ds\frac{x^r-1}{f_j(x)}$ for $j=0,1,\dots,s$.  Finally, by
$e_j(x)$ we denote the generator idempotent of $I_j$;
 i.e., $e_j(x)$ is the identity of the two-sided ideal $I_j$.
With these notations we have the following well-known result.

\begin{theorem}
{\rm (see \cite{pless-libro})}

 (i) $\mathcal{R}=I_0\oplus
I_1\oplus\cdots\oplus I_s$.

 (ii) $I_j$ is a field which is isomorphic to the field
$\mathbb{F}_{l^{deg(f_j(x))}}$ for $j=0,1,\dots,s$.

 (iii) $e_i(x)e_j(x)=0$ for $i\neq j$.

 (iv) $\sum_{j=0}^s e_j(x)=1$.
\end{theorem}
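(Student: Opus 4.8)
The plan is to obtain all four statements as immediate consequences of the Chinese Remainder Theorem, the only preliminary being that the irreducible factors $f_0,\dots,f_s$ are pairwise coprime. First I would record that this coprimality is exactly what the hypothesis $\gcd(r,p)=1$ buys, where $p=\operatorname{char}(\mathbb{F}_l)$: the formal derivative of $x^r-1$ is $rx^{r-1}$, which is nonzero since $r\not\equiv 0\pmod p$, and it shares no root with $x^r-1$ because $x^r-1$ has nonzero constant term (indeed $f_0(x)=x-1$ shows $x\nmid x^r-1$). Hence $x^r-1$ is separable, its irreducible factors are pairwise distinct, and since $\mathbb{F}_l[x]$ is a principal ideal domain, distinct irreducibles are coprime.

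With coprimality in hand, the Chinese Remainder Theorem furnishes a ring isomorphism
\[
\varphi:\mathcal{R}\longrightarrow\bigoplus_{j=0}^s \mathbb{F}_l[x]/(f_j(x)),\qquad g+(x^r-1)\mapsto\bigl(g\bmod f_0,\dots,g\bmod f_s\bigr).
\]
To prove (i) I would identify each $I_j$ with the $j$-th summand under $\varphi$. Writing $g_j=(x^r-1)/f_j(x)$ for the generator of $I_j$, note that $f_i\mid g_j$ for every $i\neq j$, so $g_j$ vanishes in every coordinate except the $j$-th, while $g_j$ is coprime to $f_j$ and therefore a unit in the field $\mathbb{F}_l[x]/(f_j(x))$. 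Thus $\varphi(I_j)$ is precisely the $j$-th factor, and the direct sum decomposition of $\mathcal{R}$ in (i) is the pullback under $\varphi^{-1}$ of the external direct sum.

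Statement (ii) is then read off the same decomposition: each $f_j$ being irreducible makes $\mathbb{F}_l[x]/(f_j(x))$ a field of dimension $\deg f_j$ over $\mathbb{F}_l$, hence a finite field of order $l^{\deg f_j}$ and so isomorphic to $\mathbb{F}_{l^{\deg f_j}}$; transporting along $\varphi$ gives the claim for $I_j$. For (iii) and (iv) I would use that the generator idempotent $e_j$ is, by definition, the multiplicative identity of $I_j$, so $\varphi(e_j)$ is the tuple with $1$ in position $j$ and $0$ elsewhere. These standard orthogonal idempotents satisfy $\varphi(e_i)\varphi(e_j)=0$ for $i\neq j$ and $\sum_{j=0}^s\varphi(e_j)=\varphi(1)$, and applying $\varphi^{-1}$ transfers both identities back to $\mathcal{R}$. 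There is no genuine obstacle in the argument; the single point that must not be glossed over is the coprimality of the factors, since without the assumption $\gcd(r,p)=1$ the polynomial $x^r-1$ would acquire repeated factors, the Chinese Remainder isomorphism would break down, and $\mathcal{R}$ would fail to split as a product of fields.
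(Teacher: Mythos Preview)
Your proof is correct and is the standard argument via the Chinese Remainder Theorem. The paper, however, does not prove this theorem at all: it is stated with the attribution ``see \cite{pless-libro}'' and used as a known background result, so there is no proof in the paper to compare against.
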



According to \cite{LSoleI}, there is a decomposition $$x^r-1=g_0(x)g_1(x)\cdots
g_m(x)h_1(x)h_1^*(x)\cdots h_t(x)h_t^*(x),$$ where $s=m+2t$ and
$\{g_0,g_1,\ldots g_m, h_1,h_1^*,\ldots, h_t,h_t^*\}=\{
f_0,f_1,\ldots,f_s\}$.
Furthermore, $h_i^*(x)$ is the reciprocal polynomial of
$h_i(x)$, $h_i^*\neq h_i$ for $i=1,\ldots,t$ and $g_i(x)$ coincides with
its reciprocal polynomial where $g_0(x)=f_0(x)=x-1$. Finally,  we denote the
field $\langle \frac{x^r-1}{g_j(x)}\rangle$
by $G_j$ for $j=0,1,\ldots,m$, $\langle \frac{x^r-1}{h_j(x)}\rangle$
by $H_j$ for $j=1,\ldots,t$, and $\langle \frac{x^r-1}{h^*_j(x)}\rangle$  by $H^*_j$ for
$j=1,\ldots,t$.


To continue the investigations, we need to prove some properties
of  binary linear codes of length $cr$ with an automorphism
$\tau$ of order $r$ which has $c$ independent $r$-cycles. If $C$ is
such a code then $C$ is a quasi-cyclic code of length $cr$ and
index $c$. Next, we define a map $\phi:\F_2^{cr}\to\mathcal{R}^c$ by
$$\phi(v)=(v_0(x),v_1(x),\dots,v_{c-1}(x))\in \mathcal{R}^c, $$
where $ v_i(x)=\sum_{j=0}^{r-1}v_{ij}x^j$ and $
(v_{i0},\dots,v_{i,r-1})=v|_{\Omega_i}.$
Clearly, $\phi(C)$ is a
linear code over the ring $\mathcal{R}$ of length $c$. Moreover, according to  \cite{LSoleI},
we have $\phi(C)^\perp=\phi(C^\perp)$ where the dual code $C^\perp$ over $\F_2$ is
taken under the Euclidean inner product, and the dual code  $\phi(C)^\perp$ in
$\mathcal{R}^c$ is taken with respect to the following Hermitian
inner product:
$$\langle
u,v\rangle=\sum_{i=0}^{c-1}u_i\overline{v}_i\in\mathcal{R}^c, \ \
\overline{v}_i=v_i(x^{-1})=v_i(x^{r-1}).$$ In particular, the
quasi-cyclic code $C$ is self-dual if and only if $\phi(C)$ is
self-dual over $\mathcal{R}$ with respect to the Hermitian inner
product.

Every linear code $C$ over the ring $\mathcal{R}$ of length $c$
can be decomposed as a direct sum
$$C=(\bigoplus_{i=0}^m C_i)\oplus(\bigoplus_{j=1}^t(C'_j\oplus
C''_j)),$$ where $C_i$ is a linear code over the field $G_i \ (
i=0,1,\dots,m)$, $C'_j$ is a linear code over $H_j$ and $C''_j$ is
a linear code over $H^*_j \  (j=1,\dots,t)$.

\begin{theorem} {\rm  (see \cite{LSoleI})}
A linear code $C$ over $\mathcal{R}$ of length $c$ is self-dual
with respect to the Hermitian inner product, or equivalently a
$c$-quasi-cyclic code of length $cr$ over $\F_q$ is self-dual with
respect to the Euclidean inner product, if and only if
$$C=(\bigoplus_{i=0}^m C_i)\oplus(\bigoplus_{j=1}^t(C'_j\oplus
(C'_j)^\perp)),$$ where $C_i$ is a self-dual code over $G_i$ for
$i=0,1,\dots,m$ of length $c$ (with respect to the Hermitian inner
product) and $C_j'$ is a linear code of
length $c$ over $H_j$ and $(C'_j)^\perp$ is its dual with respect
to the Euclidean inner product  for $1\leq j\leq t$,.
\end{theorem}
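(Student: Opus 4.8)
The plan is to track how the conjugation $a\mapsto\overline{a}$ on $\mathcal{R}$, induced by $x\mapsto x^{-1}$, interacts with the decomposition $\mathcal{R}=I_0\oplus\cdots\oplus I_s$ into minimal ideals, and then to transport this to the Hermitian form $\langle u,v\rangle=\sum_{i=0}^{c-1}u_i\overline{v}_i$ on $\mathcal{R}^c$; the equivalence with Euclidean self-duality of the associated quasi-cyclic code is already supplied by the identity $\phi(C)^\perp=\phi(C^\perp)$ recorded above, so it suffices to characterize the Hermitian self-dual codes over $\mathcal{R}$. First I would determine the action of conjugation on the component fields. As $a\mapsto\overline{a}$ is a ring automorphism of $\mathcal{R}$, it permutes the primitive idempotents $e_0,\dots,e_s$ and hence the minimal ideals; because $\overline{\,(x^r-1)/f_j(x)\,}$ generates the ideal attached to the reciprocal polynomial $f_j^*$, it sends the ideal belonging to $f_j$ to the one belonging to $f_j^*$. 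Thus conjugation fixes each self-reciprocal component $G_i$ setwise, restricting there to a field automorphism of order dividing $2$ (the identity exactly on $G_0\cong\F_l$, a genuine involution otherwise), and it interchanges $H_j$ and $H_j^*$, giving mutually inverse field isomorphisms $\overline{\,\cdot\,}\colon H_j\to H_j^*$ and $H_j^*\to H_j$.

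Next I would block-diagonalize the form. Decomposing each coordinate $v_i=\sum_j v_i^{(j)}$ with $v_i^{(j)}\in I_j$, the orthogonality $e_je_k=0$ for $j\neq k$ forces $\langle u,v\rangle$ to split: its component in $I_j$ equals $\sum_{i=0}^{c-1}u_i^{(j)}\,\overline{v_i^{(j')}}$, where $j'$ is the index of the reciprocal of $f_j$. Hence $\langle u,v\rangle=0$ is equivalent to the vanishing of each such sum. On a self-reciprocal component ($j'=j$) this is exactly the Hermitian form on $G_i^c$ induced by the involution of the previous step (degenerating to the Euclidean form on $G_0^c$); on a reciprocal pair the two conditions pair the $H_j$-part of $u$ against the $H_j^*$-part of $v$ and conversely, and under $\overline{\,\cdot\,}\colon H_j^*\to H_j$ this pairing becomes the ordinary Euclidean inner product on $H_j^c$.

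Finally I would read off self-duality block by block. Writing $C=(\bigoplus_{i=0}^m C_i)\oplus(\bigoplus_{j=1}^t(C_j'\oplus C_j''))$ for the projections, the Hermitian dual $C^\perp$ is the direct sum of the block-duals, because the form is the orthogonal direct sum just described. Therefore $C=C^\perp$ holds if and only if, on each self-reciprocal block, $C_i$ equals its Hermitian dual in $G_i^c$, i.e.\ $C_i$ is Hermitian self-dual over $G_i$; and, on each reciprocal pair, the projection $C_j'$ over $H_j$ is arbitrary while $C_j''$ must be the annihilator of $C_j'$ for the pairing, which under the identification $\overline{\,\cdot\,}\colon H_j\to H_j^*$ is precisely the Euclidean dual $(C_j')^\perp$. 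This is the asserted decomposition.

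The step I expect to be most delicate is the second one: one must verify carefully that conjugation maps the idempotent $e_{f_j}$ to $e_{f_j^*}$ (rather than merely mapping ideals to conjugate ideals) and that the induced automorphism on a self-reciprocal field $G_i$ is the correct order-$2$ map, so that ``Hermitian self-dual over $G_i$'' in the statement carries the intended meaning and collapses to ``Euclidean self-dual'' on the $x-1$ component $G_0$. A complementary dimension count — comparing $\dim C$ with the dimensions imposed by Hermitian self-duality on the $G_i$ and by the dual-pair condition on the $(H_j,H_j^*)$ — then confirms that both implications hold and that no self-dual codes are omitted.
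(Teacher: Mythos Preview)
The paper itself supplies no proof of this theorem: it is quoted from Ling and Sol\'e with the parenthetical ``(see \cite{LSoleI})'' and no accompanying argument. Your outline is correct and is precisely the Ling--Sol\'e approach: track how the involution $x\mapsto x^{-1}$ permutes the primitive idempotents (sending the ideal attached to $f_j$ to that attached to $f_j^*$), block-diagonalize the Hermitian form on $\mathcal{R}^c$ accordingly, and read off that self-duality over $\mathcal{R}$ splits into Hermitian self-duality on each self-reciprocal component $G_i$ together with the Euclidean dual-pair condition $C_j''=(C_j')^\perp$ on each reciprocal pair $(H_j,H_j^*)$; the delicacy you flag about $\overline{e_{f_j}}=e_{f_j^*}$ and the order of the induced involution on $G_i$ is exactly the point that needs checking, and it goes through.
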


\subsection{The case $r=pq$}

We consider now the case  $r=pq$ for different odd primes $p$
and $q$ such that $2$ is a primitive root modulo $p$ and modulo
$q$. The ground field  is $\F_2$. Then
$$x^r-1=(x-1)Q_p(x)Q_q(x)Q_r(x)=(1+x)(1+x+\cdots+x^{p-1})(1+x+\cdots+x^{q-1})Q_r(x)$$
where $Q_i(x)$ is the $i$-th cyclotomic polynomial. Moreover, both
$Q_p(x)$ and $Q_q(x)$ are irreducible over $\F_2$ since $2$ is primitive modulo $p$ and modulo $q$ as well.
Finally, if
$$Q_r(x)=g_3(x)\dots g_m(x)h_1(x)h_1^*(x)\cdots h_t(x)h_t^*(x)$$ is
the factorization of the $r$-th cyclotomic polynomial into
irreducible factors over $\mathbb{F}_2$, then these factors have
the same degree, namely
$\frac{\phi(r)}{m-2+2t}=\frac{(p-1)(q-1)}{m-2+2t}$, where $\phi$
is  Euler's phi function. \\

Let $$\sigma
=\Omega_1\dots\Omega_c\Omega_{c+1}\dots\Omega_{c+t_q}\Omega_{c+t_q+1}\dots\Omega_{c+t_q+t_p}\Omega_{c+t_q+t_p+1}\dots\Omega_{c+t_q+t_p+f}
$$
where \\ $\Omega_i=((i-1)r+1,\dots,ir)$ are cycles of
length $pq$ for $ i=1,\dots,c$, \\
 $\Omega_{c+i}=(cr+(i-1)q+1,\dots,cr+iq)$
 are cycles of length $q$ for $i=1,\dots,t_q$, \\
$\Omega_{c+t_q+i}=(cr+t_qq+(i-1)p+1,\dots,cr+t_qq+ip)$
 are cycles of length $p$ for $i=1,\dots,t_p$, and
$\Omega_{c+t_q+t_p+i}=(c+t_q+t_p+i)$ are the fixed
points for $ i=1,\dots,f$.                       \\

Let $E_\sigma(C)^*$ be the shortened code of $E_\sigma(C)$
obtained by removing the last $t_qq+t_pp+f$ coordinates from
the codewords having 0's there. Let $C_\phi=\phi(E_\sigma(C)^*)$.
Since $E_\sigma(C)^*$ is a binary quasi-cyclic code of length $cr$
and index $c$, $C_\phi$ is a linear  code over the ring
$\mathcal{R}$ of length $c$. Moreover
$$C_\phi=(\bigoplus_{i=0}^m M_i)\oplus(\bigoplus_{j=1}^t(M'_j\oplus
M''_j)),$$ where $M_i$ is a linear code over the field $G_i$,
$i=1,\dots,m$, $M'_j$ is a linear code over $H_j$ and $M''_j$ is a
linear code over $H^*_j$, $j=1,\dots,t$. For the dimensions we
have
\begin{align*}
    \dim E_\sigma(C)^*= \dim C_\phi & = (p-1)\dim M_1+(q-1)\dim M_2 \\
     & +\frac{(p-1)(q-1)}{m-2+2t}(\sum_{i=3}^m\dim M_i+\sum_{j=1}^t(\dim
M_j'+\dim M_j'')).
  \end{align*}

Since $E_\sigma(C)^*$ is a self-orthogonal
code, $C_\phi$ is also self-orthogonal over the ring $\mathcal{R}$
with respect to the Hermitian inner product. This means that $M_i$
are self-orthogonal codes  of length
$c$ over $G_i$ for $i=1,\dots,m$ (with respect to the Hermitian inner product) and, for $1\leq
j\leq t$, we have $M_j''\subseteq (M_j')^\perp$ with respect to the
Euclidean inner product. This forces $\dim
M_i\leq c/2$ for $i=1,2,\dots,m$ and $\dim M_j'+\dim M''_j\leq c$.
It follows that
$$\dim E_\sigma(C)^*\leq \frac{(p-1)c}{2}+\frac{(q-1)c}{2}
+\frac{(p-1)(q-1)}{m-2+2t}((m-2)\frac{c}{2}+t
c)=\frac{c(pq-1)}{2}. $$



\section{Self-dual $[96,48,20]$ codes and permutations of order 15}

Let $C$ be a binary extremal self-dual $[96,48,20]$ code with an
automorphism $\sigma$ of order 15.  We  decompose $\sigma$ in a product of $c$
independent cycles of length 15, $t_5$ cycles of length 5, $t_3$ cycles of
length 3 and $f$ cycles of length 1. Then $\sigma^5$ and $\sigma^3$ are
automorphisms of $C$ of type $3$-$(5c+t_3,5t_5+f)$ and
$5$-$(3c+t_5,3t_3+f)$, respectively. According to (\ref{primes}),
$$3c+t_5=18, \ \ 3t_3+f=6, \ \ 5c+t_3=30 \ \mbox{or} \ 32, \ \ 5t_5+f=6
\ \mbox{or} \ 0.$$ This leads to $$t_5=0, \ \ c=6, \ \
(t_3,f)=(2,0) \ \mbox{or} \ (0,6).$$

\begin{lemma}\label{cor5} If $(t_3,f)=(2,0)$ then $C_\pi$ is the extended
 $[8,4,4]$ Hamming code. If $(t_3,f)=(0,6)$ then $C_\pi$ is the self-dual
 $[12,6,4]$ code generated by the matrix $(I_6 | I_6+J_6)$ where $I_6$ is the identity matrix and $J_6$ is the all-ones matrix of size $6$.
\end{lemma}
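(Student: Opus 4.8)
The plan is to identify the self-dual code $C_\pi = \pi(F_\sigma(C))$ directly from its length and minimum distance, using the fact (established in the previous section) that $C_\pi$ is a binary self-dual code of length $m$, where $m$ is the number of cycles of $\sigma$. In the case $(t_3,f)=(2,0)$ we have $m = c + t_3 = 6 + 2 = 8$, so $C_\pi$ is a self-dual $[8,4]$ code; in the case $(t_3,f)=(0,6)$ we have $m = c + f = 6 + 6 = 12$, so $C_\pi$ is a self-dual $[12,6]$ code. Since the only self-dual codes of these short lengths are well known (up to equivalence), it suffices to pin down the minimum distance: the unique self-dual $[8,4,4]$ code is the extended Hamming code $\hat{H}_8$, and the self-dual $[12,6]$ codes are $\hat{H}_8 \oplus i_2 \oplus i_2$, $i_2^{\oplus 6}$, and the code $d_{12}^+$ generated by $(I_6 \mid I_6 + J_6)$; so again establishing $d(C_\pi) = 4$ selects the right one.

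The key step is therefore a lower bound on the minimum distance of $C_\pi$. First I would invoke Theorem~4 (Huffman's dimension formula) and its $r=pq$ analogue developed in Section~2: $\dim F_\sigma(C) = (c + t_3 + f)/2 = m/2$, confirming $C_\pi$ has the right dimension, and more importantly $\dim E_\sigma(C)^* \le c(r-1)/2$. Combining $\dim F_\sigma(C) + \dim E_\sigma(C) = 48$ with these gives the exact dimensions, but the crucial input is the minimum-weight argument: if $v \in F_\sigma(C)$ is nonzero with $\pi(v)$ of weight $w$, then $v$ is constant on its cycles, so $\wt(v) = \sum_{i : \pi(v)_i = 1} l_i \ge 20$ because $C$ has minimum distance $20$. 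Since each cycle has length $l_i \in \{1,3,15\}$, a codeword of $C_\pi$ of small weight forces $\wt(v)$ small unless the supporting cycles are long. Concretely, to get $\wt(v) \ge 20$ with few cycles in the support, most of those cycles must be $15$-cycles. A weight-$2$ word in $C_\pi$ supported on two cycles of lengths $l_i, l_j$ gives $\wt(v) = l_i + l_j \le 30$, and since $C$ is doubly-even, $l_i + l_j \equiv 0 \pmod 4$; with $l_i,l_j \in \{1,3,15\}$ this is impossible (the possible sums $2,4,6,16,18,30$ — wait, $1+3=4$) — so one must rule out a weight-$2$ word supported on a $1$-cycle and a $3$-cycle using the minimum distance $20$, which it does since $4 < 20$. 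Running this case analysis for weights $2$ shows $d(C_\pi) \ge 4$; since a self-dual binary code of length divisible by $8$ but not doubly-even, or one of length $12$, cannot have minimum distance exceeding $4$ here (lengths $8$ and $12$ cap at $d=4$ and $d=4$ respectively), we get $d(C_\pi) = 4$ exactly.

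I expect the main obstacle to be the weight-$2$ (and possibly weight-$1$) exclusion done cleanly: one must check that no nonzero codeword of $C_\pi$ of weight $1$ exists (automatic, since $C_\pi$ is self-dual hence has even weights) and that every weight-$2$ codeword, pulled back, would be a codeword of $C$ of weight at most $1 + 3 = 4$ when its two cycles are a fixed point and a $3$-cycle, or at most $3 + 3 = 6$ for two $3$-cycles, or involves a $15$-cycle giving weight $\ge 16$ — in the first two sub-cases the weight is below $20$, contradicting $d(C)=20$, and in the $15$-cycle sub-cases one needs the doubly-even condition ($15 + 15 = 30 \not\equiv 0$, $15 + 3 = 18 \not\equiv 0$, $15 + 1 = 16 \equiv 0 \pmod 4$, so a $15$-cycle plus a fixed point survives the parity test but then $\wt(v)=16 < 20$, again a contradiction). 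In the $(t_3,f)=(0,6)$ case there are no $3$-cycles, so the sub-cases are two fixed points ($\wt = 2 < 20$), a fixed point and a $15$-cycle ($\wt = 16 < 20$), or two $15$-cycles ($\wt = 30$, excluded by double-evenness); all are eliminated, forcing $d(C_\pi) \ge 4$. Once $d(C_\pi) = 4$ is secured, the classification of short self-dual codes finishes the proof: length $8$ gives the extended Hamming code, and length $12$ with $d = 4$ and the additional structural feature that $\pi$ maps the $f=6$ fixed points to coordinates on which $C_\pi$ restricts nontrivially singles out the $d_{12}^+$ code $(I_6 \mid I_6 + J_6)$ as claimed.
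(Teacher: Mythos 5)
Your length-$8$ case is correct and close to the paper's, though by a slightly different route: the paper observes that $\wt(\pi^{-1}(x))=15(x_1+\cdots+x_6)+3x_7+3x_8\equiv 3\wt(x)\pmod 4$, so $C_\pi$ is itself doubly-even and hence the extended Hamming code; your cycle-by-cycle exclusion of weight-$2$ words reaches the same conclusion (your sub-cases involving fixed points are vacuous here, since $f=0$ means the cycle lengths are only $3$ and $15$).

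The length-$12$ case has a genuine gap. What your argument establishes is that $C_\pi$ is \emph{equivalent} to $d_{12}^+$; what the lemma asserts is the stronger, positional statement that with the six $15$-cycle coordinates first and the six fixed points last, a generator matrix is $(I_6\mid I_6+J_6)$. Your closing phrase (``the additional structural feature that $\pi$ maps the fixed points to coordinates on which $C_\pi$ restricts nontrivially'') is not a proof: a copy of $d_{12}^+$ whose duos are $\{1,2\},\{3,4\},\{5,6\},\{7,8\},\{9,10\},\{11,12\}$ restricts nontrivially to the fixed-point half and still is not of the claimed form; it is excluded only because its weight-$4$ codeword supported on $\{7,8,9,10\}$ would pull back to a codeword of $C$ of weight $4<20$. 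The missing step, which is the bulk of the paper's proof, runs as follows: writing $(a,b)$ for the number of ones of $x\in C_\pi$ on the $15$-cycle half and the fixed-point half, the conditions $15a+b\ge 20$ and $15a+b\equiv 0\pmod 4$ force $a\ge 1$ for every nonzero codeword (giving the systematic form $(I_6\ D)$) and exclude weight-$4$ codewords with $(a,b)=(1,3)$ or $(0,4)$; combined with the duo/cluster structure of $d_{12}^+$ (its $15$ weight-$4$ words are exactly the unions of pairs of duos) and the count that six duos must distribute six coordinates into each half, this forces every duo to contain exactly one $15$-cycle coordinate and one fixed point, whence the generator matrix $(I_6\mid I_6+J_6)$. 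Without this analysis the lemma as stated is not proved, and the explicit form of $F_\sigma(C)$ used in the Section~3 construction is not available.
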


\begin{proof}
Let $C$ be a binary extremal self-dual $[96,48,20]$ code and let
$$\sigma=\Omega_1\Omega_2\Omega_3\Omega_4\Omega_5\Omega_6\Omega_7\Omega_8$$
be its
automorphism of order 15, where $\Omega_i=(15(i-1)+1,\dots,15i)$ for $i=1,\dots,6$,
$\Omega_7=(91,92,93)$, $\Omega_8=(94,95,96)$. Hence $C_\pi$ is a
binary self-dual code of length 8. If $x=(x_1,\dots,x_8)\in C_\pi$
then $\wt(\pi^{-1}(x))=15(x_1+\cdots+x_6)+3x_7+3x_8\equiv
3\wt(x)\pmod 4$. Since $C$ is a doubly-even code, $\wt(x)\equiv
0\pmod 4$ and $C_\pi$ must be a doubly-even code, too. The only
doubly-even self-dual code of length 8 is the extended
$[8,4,4]$ Hamming code. Its automorphism
group acts 2-transitively on the code, so we can take any pair of
coordinates for the two 3-cycles.

In the case $f=6$,  $C_\pi$ is a self-dual code of length $12$ and so its minimum weight is at most 4 by \cite{Handbook-SelfDual}. If $x=(x_1,\dots,x_{12})\in C_\pi$
then $$\wt(\pi^{-1}(x))=15(\underbrace{x_1+\cdots+x_6}_a)+\underbrace{x_7+\cdots+x_{12}}_b=15a+b\ge 20.$$ Hence $a\ge 1$ and if $a=1$ then $b=5$. It follows that $C_\pi$ is a self-dual
 $[12,6,4]$ code with a generator matrix in the form $(I_6 \ D)$. The only such code is $d_{12}^+$ (see \cite{Handbook-SelfDual}). For the structure of $d_{12}^+$ we use the terms from \cite{Huff}. This code has a defining set which means that its coordinates can be partitioned into duo's $\{l_1,l_2\}$, $\{l_3,l_4\}$, $\{l_5,l_6\}$, $\{l_7,l_8\}$, $\{l_9,l_{10}\}$, $\{l_{11},l_{12}\}$, such that its 15 codewords of weight 4 are the vectors with supports $\{l_{2i-1},l_{2i},l_{2j-1},l_{2j}\}$ where $1\le i<j\le 6$ (clusters). Since $C_\pi$ does not contain a codeword $x$ of weight 4 with $(a,b)=(1,3)$ or $(0,4)$ it turns out that $\{l_1,l_3,l_5,l_7,l_9,l_{11}\}=\{1,2,3,4,5,6\}$ and $\{l_2,l_4,l_6,l_8,l_{10},l_{12}\}=\{7,8,9,10,11,12\}$. As a basis for the code we can take the clusters $\{l_i,l_{i+1},l_{i+6},l_{i+7}\}$ for $i=1,2,\dots,5,$ with the $d$-set $\{1,7,8,9,10,11,12\}$.
Hence  $C_\pi$  has a generator matrix of shape $(I_6 | I_6+J_6)$.
\end{proof}

We consider both possibilities for the structure of $\sigma$ simultaneously. Since
$$x^{15}-1=(x-1)\underbrace{(1+x+x^2)}_{Q_3(x)}\underbrace{(1+x+x^2+x^3+x^4)}_{Q_5(x)}\underbrace{(1+x+x^4)}_{h(x)}\underbrace{(1+x^3+x^4)}_{h^*(x)},$$
we obtain
$$
    \dim E_\sigma(C)^*=2\underbrace{\dim M_1}_{\leq 3}+4\underbrace{\dim M_2}_{\leq 3}
+4(\underbrace{\dim M'+\dim M''}_{\leq 6}).$$
According to the balance principle (see \cite{JW}, \cite{pless-libro} or \cite{Handbook-SelfDual}), the dimension of the subcode of $C$ consisting of the codewords with 0's in the last six coordinates, is equal to $42=48-6$. Hence if $f=6$ then $\dim E_\sigma(C)^* = 42$. In the other case, the dimension of the subcode of $C_{\pi}\cong e_8$, consisting of the codewords with 0's in the last two coordinates, is 2 and therefore $\dim E_\sigma(C)^* = 40$.
It follows that
$$\dim M_1=2 \ \mbox{or} \ 3, \ \ \dim M_2=3 \ \mbox{and} \  \dim M'+\dim M''=6.$$
This means that $$C_\phi=M_1\oplus M_2\oplus M'\oplus M'',$$ where
$M_1$ is a Hermitian self-orthogonal $[6,2,\ge 2]$ code in the case $f=0$ and a self-dual $[6,3,\ge 2]$ code in the case $f=6$ over the field
$G_1\cong \F_4$, $M_2$ is a Hermitian self-dual $[6,3,d_2]$ code
over $G_2\cong\F_{16}$, $M'$ is a linear $[6,k',d']$ code over
$H\cong\F_{16}$ and $M''=(M')^\perp$ is its dual with respect to
the Euclidean inner product. If $v$ is a codeword of weight $t$ in $M_2$, $M'$ or $M''$ then the vectors $\phi^{-1}(v)$, $\phi^{-1}(xv)$, $\phi^{-1}(x^2v)$ and $\phi^{-1}(x^3v)$ generate a binary code of dimension 4 and effective length $15t$. It is a subcode of $C$ and therefore its minimum distance should be at least 20. Since binary codes of length 30, dimension 4 and minimum distance $\ge 20$ do not exist \cite{table-Grassl},
$d_2=3$ or 4, $d'\ge 3$ and the minimum distance of $M''$ is at
least 3. In the following we list the three possible cases for $M'$ and $M''$
where $$ e = e(x)=x^{12}+x^9+x^8+x^6+x^4+x^3+x^2+x$$
is the identity of the field
 $H=\{0, e, xe, x^2e,\ldots, x^{14}e\}$.

\begin{enumerate}
\item $M'$ is an MDS $[6,2,5]$ code and $M''$ is its dual MDS
$[6,4,3]$ code. It is well known that any MDS $[n,k,n-k+1]$ code
over $\F_q$ is an $n$-arc in the projective geometry $PG(k-1,q)$.
There are exactly four inequivalent $[6,2,5]$ MDS codes over
$\F_{16}$ \cite{tableMDS} (their dual codes correspond to the
6-arcs in $PG(3,16)$). We list here generator matrices of these
codes:
\[\left(\begin{array}{cccccc}
e&0&e&e&e&e\\
0&e&e&xe&x^2e&x^3e\\
\end{array}\right) \ \ \ \
\left(\begin{array}{cccccc}
e&0&e&e&e&e\\
0&e&e&xe&x^2e&x^4e\\
\end{array}\right)\]
\[\left(\begin{array}{cccccc}
e&0&e&e&e&e\\
0&e&e&xe&x^3e&x^7e\\
\end{array}\right) \ \ \ \
\left(\begin{array}{cccccc}
e&0&e&e&e&e\\
0&e&e&xe&x^3e&x^{11}e\\
\end{array}\right)\]

\item $M'$ and $M''$ are both MDS $[6,3,4]$ codes. According to \cite{tableMDS}, there are 22 MDS codes with the needed parameters over $\F_{16}$ (they correspond to the 6-arcs in
$PG(2,16)$). We consider generator matrices of these codes in the form
\[\left(\begin{array}{cccccc}e&0&0&e&e&e\\
0&e&0&e&x^{a_1}e&x^{a_2}e\\
0&0&e&e&x^{a_3}e&x^{a_4}e\\
\end{array}\right), \ \ a_i\in\{1,2,\dots,14\}, \ i=1,2,3,4. \]
 Note that $a_i\ge 1$ for $i=1,2,3,4$ since the minimum distance of $M'$ is 4.
We calculated the weight distributions and the automorphism groups of
$\phi^{-1}(M'\oplus M'')$ for all 22 codes $M'$. The results are listed in Table \ref{TableMDS634}. Five of the binary codes have minimum distance 24,
and six of them have minimum distance 20.

\begin{table}[!ht]
\centering
\caption{The $[90,24]$ codes in case 2}\label{TableMDS634} {\footnotesize
\begin{tabular}{c@{\ }|@{\ }c@{\ }|@{\ }c@{\ }|@{\ }c@{\ }|@{\ }c@{\ }|@{\ }c@{\ }|@{\ }c@{\ }|@{\ }c@{\ }|@{\ }c@{\ }|@{\ }c@{\ }|@{\ }c@{\ }|@{\ }c}
$(a_1,a_2,a_3,a_4)$&$A_{16}$&$A_{20}$&$A_{24}$&$A_{28}$&$A_{32}$&$A_{36}$&$|\Aut|$\\
\hline
$(1,2,2,1)$&270&0&5400&15840&195345&941400&1440\\
$(1,2,2,4)$&60&120&2730&18480&189885&950280&240\\
$(1,2,2,5)$&15&30&2070&17535&187815&963480&30\\
$(1,2,2,6)$&45&180&1935&17505&183015&975420&90\\
$(1,2,2,8)$&45&0&2580&15660&188715&965040&240\\
$(1,2,2,9)$&15&30&2130&17355&187575&965160&30\\
$(1,2,3,1)$&30&120&2430&19650&192105&937200&120\\
$(1,2,3,6)$&-&-&2325&16320&192585&953040&60\\
$(1,2,3,7)$&-&60&1875&17955&189465&956220&30\\
$(1,2,3,8)$&-&-&2145&17340&190185&956400&30\\
$(1,2,3,12)$&-&60&1965&18060&187545&960120&60\\
$(1,2,4,6)$&-&60&2040&17910&187485&959400&60\\
$(1,2,5,7)$&-&90&1830&18390&186405&963900&30\\
$(1,2,6,1)$&60&0&3090&17400&194205&941400&240\\
$(1,2,9,1)$&30&120&2910&17250&196425&933840&120\\
$(1,2,12,1)$&90&360&3240&23940&192825&909720&720\\
$(1,3,2,6)$&-&-&2325&16320&192585&953040&60\\
$(1,3,3,2)$&-&180&1665&18720&185625&960840&90\\
$(1,3,7,2)$&-&-&2295&16830&191745&950040&180\\
$(1,3,7,10)$&-&180&1755&18450&185265&963360&360\\
$(1,3,11,8)$&-&-&2730&14100&197925&944760&600\\
$(5,10,10,5)$&450&0&14580&16200&329625&507960&259200\\
\end{tabular}}
\end{table}

\item $M'$ and $M''$ are both $[6,3,3]$ codes.
We consider generator matrices of $M'$ in the form
\[\left(\begin{array}{cccccc}e&0&0&0&e&e\\
0&e&0&e&\beta_1&\beta_2\\
0&0&e&e&\beta_3&\beta_4\\
\end{array}\right), \ \ \beta_i\in H, \ i=1,2,3,4,\]
where $\beta_i=x^{b_i}e$, $b_i\in\{0,1,\dots,14\}$,  or $\beta_i=0$, $i=1,2,3,4$.


We calculated that there are 18 inequivalent $[6,3,3]$ codes $M'$ over $\F_{16}$ such that $d(\phi^{-1}(M'\oplus M''))\geq 20$.
The weight distributions and the automorphism groups of $\phi^{-1}(M'\oplus M'')$ for all 18 codes are listed in Table \ref{TableMDS633}. Ten of the binary codes have minimum distance 24, and eight of them have minimum distance 20.

\begin{table}[!ht]
\centering
\caption{The $[90,24]$ codes in case 3}\label{TableMDS633} {\footnotesize
\begin{tabular}{c@{\ }|@{\ }c@{\ }|@{\ }c@{\ }|@{\ }c@{\ }|@{\ }c@{\ }|@{\ }c@{\ }|@{\ }c@{\ }|@{\ }c@{\ }|@{\ }c@{\ }|@{\ }c@{\ }|@{\ }c@{\ }|@{\ }c}
$(b_1,b_2,b_3,b_4)$&$A_{20}$&$A_{24}$&$A_{28}$&$A_{32}$&$A_{36}$&$|\Aut|$\\
\hline
$(0,0,0,7)$&-&2250&17640&187605&960120&180\\
$(0,0,2,3)$&-&2070&18060&187125&963960&60\\
$(0,0,2,6)$&-&1950&18420&187605&960600&30\\
$(0,2,2,9)$&-&2175&17670&188625&957480&60\\
$(0,2,3,4)$&-&2070&17730&189285&958080&15\\
$(0,2,3,7)$&-&2025&17865&189465&956820&15\\
$(0,2,3,11)$&-&2070&18030&187485&962280&30\\
$(0,2,3,12)$&-&2010&18210&187725&960600&15\\
$(0,2,4,7)$&-&2190&16890&191925&953040&30\\
$(0,2,4,13)$&-&2100&17640&189165&958920&60\\
$(0,0,0,2)$&90&1755&18900&184545&968940&90\\
$(0,0,2,5)$&30&1905&18630&186585&961260&30\\
$(0,0,2,9)$&30&2025&18270&186105&964620&30\\
$(0,0,3,5)$&30&1935&18540&186465&962100&30\\
$(0,2,2,3)$&60&2055&18030&186225&963600&30\\
$(0,2,2,5)$&60&1935&18015&188265&958860&30\\
$(0,2,2,8)$&180&1800&18630&184365&962760&180\\
$(0,2,4,0)$&90&1830&18630&185445&964860&30\\
\end{tabular}}
\end{table}

\end{enumerate}


In the following $G_1$ is the field with four elements and identity
$$ e_1=x + x^2 + x^4 + x^5 + x^7 + x^8 + x^{10} + x^{11} + x^{13} + x^{14},$$
and $G_2$ the field with $16$ elements and identity
$$ e_2=x + x^2 + x^3 + x^4 + x^6 +
x^7 + x^8 + x^9 + x^{11} + x^{12} + x^{13} + x^{14},$$ defined in the beginning of this section.
Furthermore $\mu_2=x^{11} + x^{10} + x^6 + x^5 + x + 1$ is a generator of $G_2$.

According to \cite{Yorgov-Yankov}, there are two Hermitian
self-dual $[6,3,d\ge 3]$ codes over $\F_{16}$ up to the
equivalence defined in the following way: Two codes are equivalent
if the second one is obtained from the first one via a sequence of
the following transformations:
\begin{itemize}
\item a substitution $x\to x^t$, $t=2,4,8$;
\item a multiplication of any coordinate by $x$;
\item a permutation of the coordinates.
\end{itemize}

Their generator matrices are
\[
H_1=\left(\begin{array}{cccccc}
e_2 & 0 & 0 & 0 & \mu_2^5 & \mu_2^{10} \\
0 & e_2 & 0 & \mu_2^5 & \mu_2^5 & e \\
0 & 0 & e_2 & \mu_2^{10} & e_2 & \mu_2^{10} \\
\end{array}\right), \ \ \ \ \
H_2=\left(\begin{array}{cccccc}
e_2 & 0 & 0 & e_2 &\mu_2^5 & \mu_2^5 \\
0 & e_2 & 0 & e_2 & \mu_2^2 & \mu_2^8 \\
0 & 0 & e_2 & e_2 & \mu_2^6 & \mu_2^9 \\
\end{array}\right).\] 

We fix the $M'\oplus M''$ part of the generator matrix and
consider all possible generator matrices for the $M_2$ part. Note
that even if the matrices generate equivalent
codes $M_2$ the codes generated by $M'\oplus M''\oplus M_2$ may not be
equivalent. We consider the two possible matrices for the $M_2$
part under the products of the following maps: 1) a permutation $\tau\in S_6$ of
the 15-cycle coordinates; 2) multiplication of each of the 6
columns by nonzero element of $F_{16}$; 3) automorphism of the
field ($x\to x^t$, $t=2,4,8$). After computing all possible
generator matrices we obtain exactly 675 inequivalent $[90, 36,
20]$ binary codes: 232 from the first matrix $H_1$, and 443 from the
second $H_2$. These codes have automorphism groups of orders 15
(557 codes), 30 (111 codes), 45 (2 codes) and 90 (5 codes).

 Next we separate the cases $f=0$ and $f=6$.
\begin{itemize}
\item[Case] $f=0$: Let first add the fixed subcode. According to Lemma \ref{cor5},
the code $\pi(F_\sigma(C))$ is equivalent to the extended Hamming
$[8,4,4]$ code $H_8$. As we already mentioned in the proof of Lemma \ref{cor5}, we can take any pair of
coordinates for the 3-cycles. Then we consider all $6!=720$
permutation of the 15-cycles that can lead to different subcodes.
Only 47 of the constructed codes $\phi^{-1}(M'\oplus M''\oplus M_2)\oplus F_\sigma(C)$ have minimum distance $d'=20$ (we list
the number of their codewords of weights 20 and 24 and the order of
the automorphism groups in Table \ref{n90_40_20}).

\begin{table}[!ht]
\centering
\caption{The $[96,40,20]$ codes}\label{n90_40_20}
{\footnotesize
\begin{tabular}{|c@{\ }|@{\ }c@{\ }|@{\ }c@{\ }|@{\ }c@{\ }||c@{\ }|@{\ }c@{\ }|@{\ }c@{\ }|@{\ }c@{\ }||c@{\ }|@{\ }c@{\ }|@{\ }c@{\ }|@{\ }c@{\ }|}
&$A_{20}$&$A_{24}$&$|\Aut|$&&$A_{20}$&$A_{24}$&$|\Aut|$&&$A_{20}$&$A_{24}$&$|\Aut|$\\
\hline
$C_{96,40,1}$&48735&4206590&1620&$C_{96,40,17}$&47925&4216010&540&$C_{96,40,33}$&48045&4213610&540\\
\hline
$C_{96,40,2}$&49545&4197410&1620&$C_{96,40,18}$&48105&4213730&540&$C_{96,40,34}$&48420&4209320&540\\
\hline
$C_{96,40,3}$&47835&4217030&1620&$C_{96,40,19}$&48600&4207760&540&$C_{96,40,35}$&47760&4216160&540\\
\hline
$C_{96,40,4}$&47940&4214600&540&$C_{96,40,20}$&48420&4208120&540&$C_{96,40,36}$&48780&4204760&540\\
\hline
$C_{96,40,5}$&48405&4209530&540&$C_{96,40,21}$&47325&4220810&540&$C_{96,40,37}$&48510&4209500&540\\
\hline
$C_{96,40,6}$&47805&4214810&540&$C_{96,40,22}$&47595&4216070&540&$C_{96,40,38}$&47460&4217720&540\\
\hline
$C_{96,40,7}$&47205&4222490&540&$C_{96,40,23}$&48345&4209650&540&$C_{96,40,39}$&48330&4210100&1080\\
\hline
$C_{96,40,8}$&48690&4204820&540&$C_{96,40,24}$&47925&4213370&540&$C_{96,40,40}$&47415&4221950&1080\\
\hline
$C_{96,40,9}$&47265&4220450&540&$C_{96,40,25}$&47835&4215110&540&$C_{96,40,41}$&48315&4210550&540\\
\hline
$C_{96,40,10}$&47580&4216520&540&$C_{96,40,26}$&47790&4214780&540&$C_{96,40,42}$&47490&4218740&540\\
\hline
$C_{96,40,11}$&47565&4219370&1080&$C_{96,40,27}$&49410&4200020&540&$C_{96,40,43}$&49140&4201880&540\\
\hline
$C_{96,40,12}$&48255&4212110&540&$C_{96,40,28}$&48225&4210610&540&$C_{96,40,44}$&48330&4212500&1080\\
\hline
$C_{96,40,13}$&48555&4207190&540&$C_{96,40,29}$&48360&4209920&540&$C_{96,40,45}$&48870&4212860&1080\\
\hline
$C_{96,40,14}$&48165&4211690&1080&$C_{96,40,30}$&48600&4214000&1080&$C_{96,40,46}$&47970&4213220&540\\
\hline
$C_{96,40,15}$&48555&4206710&1080&$C_{96,40,31}$&47775&4215230&540&$C_{96,40,47}$&47925&4215050&1080\\
\hline
$C_{96,40,16}$&48630&4205900&540&$C_{96,40,32}$&49815&4194350&1620&&&&\\
\hline
\end{tabular}}
\end{table}

Next we add the $M_1$ part, that is a Hermitian self-orthogonal $[6, 2, \geq 2]$ code over the field $G_1\cong\F_4$. One can easily compute all such codes up to equivalence. There are exactly 4 inequivalent such codes
with generator matrices
$$H_3=\left(\begin{array}{cccccc}
e_1&0&e_1&0&0&0\\
0&e_1&0&e_1&0&0\\
\end{array}\right), \ \ \
H_4=\left(\begin{array}{cccccc}
e_1&0&e_1&e_1&e_1&0\\
0&e_1&e_1&xe_1&x^2e_1&0\\
\end{array}\right),$$
$$H_5=\left(\begin{array}{cccccc}
e_1&0&e_1&0&0&0\\
0&e_1&0&e_1&e_1&e_1\\
\end{array}\right), \ \ \
H_6=\left(\begin{array}{cccccc}
e_1&0&0&e_1&e_1&e_1\\
0&e_1&e_1&0&e_1&e_1\\
\end{array}\right).$$

We fix the generator matrices of the 47 codes and consider the
matrices $H_3,H_4,H_5,H_6$ under compositions of the following
transformations: 1) a permutation $\tau\in S_6$ of the 15-cycle coordinates;
2) multiplication of each of the 6 columns by a nonzero element of
$G_1$; 3) automorphism of the field ($x\to x^2$). Thus we
construct binary $[96,44]$ codes. Our computations show that none
of these codes has minimum distance $d\ge 20$.

\item[Case] $f=6$:
 Now we add the $M_1$ part, which is a Hermitian quaternary self-dual code of length 6 over the field $G_1\cong\F_4$. There are two inequivalent codes of this length - $i_2^3$ with minimum weight 2 and $h_6$ with minimum weight 4 (see \cite{Handbook-SelfDual}). All 675 inequivalent $[90, 36,20]$ codes combined with the binary images of the different copies to both quaternary self-dual codes give binary self-orthogonal $[90,42,\le 16]$ codes.
\end{itemize}

This proves
Theorem \ref{Main_theorem} which states that
a binary doubly-even $[96, 48, 20]$ self-dual code with an automorphism of order 15 does not exist. The
 calculations were done with the \textsc{GAP Version 4}
 software system \cite{GAP4} and the program \textsc{Q-Extension} \cite{Serdica_q_ext}.

\section{On the automorphism of type $3$-$(28,12)$}

In this section we fill
 a gap in the literature caused by a missing proof on the nonexistence of an extremal self-dual code of length 96 having an automorphism of type $3$-$(28,12)$.
In  \cite{JW}, the authors used this assertion in their proof of the main theorem.

\begin{proposition}
A binary doubly-even $[96,48,20]$
self-dual code with an automorphism of type $3$-$(28,12)$ does not exist.
\end{proposition}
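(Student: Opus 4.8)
The plan is to transfer the analysis of Section~3 to the prime $r=3$ with $c=28$ three-cycles and $f=12$ fixed points. Assume, for contradiction, that a binary doubly-even self-dual $[96,48,20]$ code $C$ has an automorphism $\sigma$ of type $3$-$(28,12)$; after permuting coordinates we may put the $28$ cycles of length $3$ on the first $84$ coordinates and the $12$ fixed points on the last $12$ coordinates. By the theorems of Section~2, $C=F_\sigma(C)\oplus E_\sigma(C)$ with $\dim F_\sigma(C)=(c+f)/2=20$ and $\dim E_\sigma(C)=c(r-1)/2=28$, and $C_\pi=\pi(F_\sigma(C))$ is a binary self-dual $[40,20]$ code. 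Because even weight on a cycle of length $1$ forces weight $0$ there, every codeword of $E_\sigma(C)$ vanishes on the $12$ fixed coordinates, so $\dim E_\sigma(C)^*=28$. Writing $x\in C_\pi$ as $(x',x'')$ with $x'\in\F_2^{28}$ on the cycle coordinates and $x''\in\F_2^{12}$ on the fixed coordinates, the vector $\pi^{-1}(x)$ has weight $3\wt(x')+\wt(x'')$, which must be divisible by $4$ and, for $x\neq 0$, at least $20$; since $\wt(x'')\le 12$, a short case analysis ($\wt(x')\le 2$ forces weight $\le 18$) yields $\wt(x')\ge 3$ for every nonzero $x\in C_\pi$.

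Next I would identify $E_\sigma(C)$. Over $\F_2$ we have $x^3-1=(x-1)(1+x+x^2)$ with both factors irreducible, so $\mathcal R=\F_2[x]/(x^3-1)\cong G_0\oplus G_1$ with $G_0\cong\F_2$ and $G_1\cong\F_4$, where $G_1$ consists precisely of the even-weight vectors on a $3$-cycle. Hence $C_\phi=\phi(E_\sigma(C)^*)$ lies in $G_1^{28}$; that is, $M_1:=C_\phi$ is an $\F_4$-linear code of length $28$. Since $E_\sigma(C)^*$ is self-orthogonal, $M_1$ is Hermitian self-orthogonal, and as $\dim_{\F_2}M_1=\dim_{\F_2}E_\sigma(C)^*=28$ it has $\F_4$-dimension $14$ and is therefore Hermitian self-dual. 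If $v\in M_1$ has Hamming weight $t$, then $\phi^{-1}(v)$, extended by zeros on the fixed coordinates, is a nonzero codeword of $C$ of weight $2t$, so $2t\ge 20$ and $M_1$ has minimum distance at least $10$. By the analogue of Gleason's theorem for Hermitian self-dual codes over $\F_4$ (see \cite{Handbook-SelfDual}), a $[28,14]$ code has minimum distance at most $10$; therefore $M_1$ is an \emph{extremal} Hermitian self-dual $[28,14,10]$ code over $\F_4$, of which there are only finitely many and which can be classified with \textsc{Q-Extension} \cite{Serdica_q_ext}.

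The proof would then be finished by a computer search parallel to the one in Section~3. For each extremal Hermitian self-dual $[28,14,10]$ code $M_1$ over $\F_4$, fix the binary code $E_\sigma(C)=\phi^{-1}(M_1)$ (zero on the fixed coordinates) and adjoin the fixed part $F_\sigma(C)=\pi^{-1}(C_\pi)$, letting $C_\pi$ range over the binary self-dual $[40,20]$ codes compatible with the constraints above, namely that $3\wt(x')+\wt(x'')$ is a multiple of $4$ and is $\ge 20$ (so $\wt(x')\ge 3$) for every nonzero $x\in C_\pi$, and that for all nonzero $w\in F_\sigma(C)$ with $x=\pi(w)$ and all nonzero $v\in M_1$,
\[
\wt(w)+2\wt(v)-4\,|\mathrm{supp}(x')\cap\mathrm{supp}(v)|\ \ge\ 20 .
\]
Running this search modulo the group generated by permutations of the $28$ cycles, scalings of the $\F_4$-coordinates, the field automorphism of $\F_4$ and permutations of the $12$ fixed points, one verifies that in every case $F_\sigma(C)\oplus E_\sigma(C)$ has minimum distance below $20$. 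This contradicts $d(C)=20$ and proves the proposition; the computations would be carried out with \textsc{GAP} \cite{GAP4} and \textsc{Q-Extension} \cite{Serdica_q_ext}.

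The main obstacle is the size of this last search. Unlike the order-$15$ case, where the fixed code $C_\pi$ had length at most $12$, here $C_\pi$ is a self-dual code of length $40$, of which there are a huge number; the constraints --- especially $\wt(x')\ge 3$ and the overlap bound with the weight-$10$ words of $M_1$ --- must be used aggressively to prune the tree of partial codes down to a feasible size. One must also be confident that the classification of extremal Hermitian self-dual $[28,14,10]$ codes over $\F_4$ is complete; if it should turn out that no such code exists, the proposition follows immediately, since $M_1$ would necessarily be one.
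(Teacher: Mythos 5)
Your structural reductions are correct as far as they go: the decomposition $C=F_\sigma(C)\oplus E_\sigma(C)$ with $\dim F_\sigma(C)=20$ and $\dim E_\sigma(C)=28$, the fact that $C_\pi$ is a binary self-dual $[40,20]$ code, and the identification of $\phi(E_\sigma(C)^*)$ with a Hermitian self-dual $[28,14,\ge 10]$ code over $\F_4$ are all right. But what you have written is a research programme, not a proof, and both of its load-bearing prerequisites are unestablished. First, you need a complete classification of extremal Hermitian self-dual $[28,14,10]$ codes over $\F_4$; no such classification exists in the literature (you concede this yourself), and there is no reason to expect the number of such codes to be small. Second, and worse, the final step requires ranging over all binary self-dual $[40,20,8]$ codes $C_\pi$ compatible with your constraints and over all inequivalent ways of gluing each of them to each candidate $E_\sigma(C)$; self-dual codes of length $40$ number in the tens of thousands even in the doubly-even case and far more in general, and the gluing multiplies this by the coset space of the relevant stabilizers. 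You acknowledge that this search must be "pruned down to a feasible size" but give no argument that it can be. Since no computation is actually carried out and no contradiction is actually derived, the proposition is not proved.

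The paper's own argument is entirely different and avoids every one of these obstacles: it never touches $E_\sigma(C)$ beyond the dimension count. It works only with $C_\pi$, writes its generator matrix in the form $\bigl(\begin{smallmatrix}A&O\\ D&I_{12}\end{smallmatrix}\bigr)$, observes that $A$ generates a doubly-even $[28,8,\ge 8]$ code $\mathcal A$ with dual distance $\ge 3$, and parametrizes the weight distribution of $\mathcal A$ by two integers $\lambda=A_8(\mathcal A)$ and $\mu=A_{24}(\mathcal A)$ via the MacWilliams equalities. It then imposes the divisibility, minimum-distance and symmetry constraints on the partitioned weight enumerator $A_{ij}$ of $C_\pi$ (with respect to the split into the $28$ cycle coordinates and the $12$ fixed coordinates) and applies Simonis's MacWilliams identities for coordinate partitions. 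Solving the resulting linear system forces $\lambda=-1$, which is impossible since $\lambda$ counts codewords. This is a finite linear-algebra computation with no enumeration of codes at all. If you want to salvage your approach, you would at minimum have to replace the brute-force search over $[40,20]$ self-dual codes by some comparably sharp counting argument; as it stands, the step "one verifies that in every case $F_\sigma(C)\oplus E_\sigma(C)$ has minimum distance below $20$" is an unsupported assertion about a computation that has not been, and plausibly cannot be, performed.
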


\begin{proof}
Suppose that $C$ is a self-dual $[96,48,20]$ code and $\sigma$ is an automorphism of $C$ of type 3-(28,12).
 Then
$C_\pi$ is a self-dual $[40,20,8]$ code. Without loss of generality, we can take the last 12 coordinates for the fixed points. So $C_{\pi}$ has a generator matrix of the form
\begin{equation}\label{matGpi36}
G_{\pi} =\left(\begin{array}{cc}
A & O\\
D & I_{12}
\end{array}\right),\end{equation}
where $A$ is an $8\times 28$ matrix which generates a doubly-even $[28,8,\ge 8]$ code ${\cal A}$ with  dual distance $d_{\cal A}^\perp\ge 3$.
 Using the MacWilliams equalities we see that the possible weight distribution for this code is $$W_{\cal A}(y)=1+\lambda y^8+(142-3\lambda-\mu)y^{12}+(95+3\lambda+3\mu)y^{16}+(18-\lambda-3\mu)y^{20}+\mu y^{24},$$
and the number of codewords of weight 3 in its dual code is $\nu=2\lambda-2\mu-4$.

 Let us consider the partitioned weight enumerator $A_{ij}$ for the code $C_{\pi}$, where $0\le i\le 28$ and $0\le j\le 12$. We use the following restrictions:
 \begin{itemize}
 \item If $3i+j\not\equiv 0\pmod 4$ then $A_{ij}=0$. 
 \item If $0<i+j<8$ or $32<i+j<40$ then $A_{ij}=0$.
 \item If $0<3i+j<20$ or $76<3i+j<96$ then $A_{ij}=0$.
 \item $A_{i0}=\alpha_i$, where $\{\alpha_i, i=0,\dots,28\}$ is the weight distribution of ${\cal A}$. 
 \item $A_{ij}=A_{28-i,12-j}$, $i=0,\dots,28$, $j=0,\dots,12$.
  \end{itemize}

  According to the MacWilliams identities for coordinate partitions (see \cite{Sim}) and the above restrictions, we obtain the following system of linear equations
  $$
  2^{20}A_{s,0}=\sum_{i=0}^{28}\sum_{j=0}^{12}\mathcal{K}_s(i;28)\mathcal{K}_0(j;12)A_{i,j}; \ \ \ \ \ 2^{20}A_{s,1}=\sum_{i=0}^{28}\sum_{j=0}^{12}\mathcal{K}_s(i;28)\mathcal{K}_1(j;12)A_{i,j}$$
  $$\iff 2^{20}A_{s,0}=\sum_{i=0}^{28}\sum_{j=0}^{12}\mathcal{K}_s(i;28)A_{i,j}; \ \ \ \ \  2^{20}A_{s,1}=\sum_{i=0}^{28}\sum_{j=0}^{12}\mathcal{K}_s(i;28)(12-2j)A_{i,j}$$
   $$\iff 2^{20}A_{s,0}=\sum_{i=0}^{28}\sum_{j=0}^{12}\mathcal{K}_s(i;28)A_{i,j};  \ \ \ \ \ 2^{19}(12A_{s,0}-A_{s,1})=\sum_{i=0}^{28}\sum_{j=0}^{12}j\mathcal{K}_s(i;28)A_{i,j}$$

  Solving this system with respect to 25 of the unknowns by using Computer Algebra System Maple, we obtain $\lambda=-1$, a contradiction.
\end{proof}




\end{document}